\documentclass[11pt]{article}
\usepackage[T1]{fontenc}
\usepackage[utf8]{inputenc}
\usepackage{planargraph}
\usepackage{comment}
\usepackage{amsthm}
\usepackage{graphicx}
\usepackage{subcaption} % for side-by-side figures
\usepackage{url}
\usepackage{hyperref}
\usepackage{enumitem}
\newtheorem{invariant}{Invariant}
\newtheorem{lemma}{Lemma}
\newtheorem{theorem}{Theorem}
\newtheorem{corollary}{Corollary}
\newenvironment{claim}{\textbf{Claim: }\em}{}
\newsavebox{\ftabbingbox}
\newenvironment{ftabbing}
  {\begin{lrbox}{\ftabbingbox}
   \begin{minipage}{\dimexpr\linewidth-2\fboxsep-2\fboxrule}
   \begin{tabbing}}
  {\end{tabbing}
   \end{minipage}
   \end{lrbox}
   \fbox{\usebox{\ftabbingbox}}}

\begin{document}
\title{Simple analysis of an algorithm for multiple-source shortest paths in
  planar graphs}
\author{Philip N. Klein\\Brown University}
\date{}
\maketitle
\begin{abstract}
This paper addresses the following problem: given a planar embedding
graph, compute a representation of the shortest-path trees rooted at
all the boundary nodes of the graph.  Klein gave an $O(n \log n)$
algorithm for this problem; the algorithm subsequently became an
essential ingredient in dozens of algorithms, addressing problems
ranging from distance oracles to edit distance.  However, the
correctness and analysis in that original paper is
complicated and messy and hard to understand.  In this paper, we give
a simple and clear analysis.
\end{abstract}  

\newpage

%\note{Notation $T(a,b]$}

\section{Introduction}

\subsection{The problem}
\label{sec:problem}

In SODA 2005~\cite{Klein05}, Klein presented a problem and an algorithm for the
problem.  At a high level, the problem is as follows:
\begin{itemize}
  \item \emph{input:} A strongly connected directed planar embedded graph with
    arc lengths
  \item \emph{output:} for each node on the boundary of the infinite
    face, a shortest-path tree rooted at that node.
  \end{itemize}
We will use the term \emph{boundary nodes} to refer to the
nodes on the boundary of the infinite face.

In the worst case, an $n$-node planar graph could have $\Omega(n)$
boundary nodes.  A direct
representation of a shortest-path tree has size $\Omega(n)$, so
in the worst case the output would have size $\Omega(n^2)$.

However, there is a more compact representation of these trees: using
\emph{diffs} (a.k.a. deltas).  Let $r_0\ r_1\ \cdots\ r_k$ be the sequence
 of boundary nodes as they appear in counterclockwise order along the boundary of the
infinite face.\footnote{A node can occur more than once.}
For $j=0,1\ldots, n$, let $T_j$ be a $r_j$-rooted shortest-path
tree.  The compact representation consists of the $r_0$-rooted
shortest-path tree and, for $j=1, \ldots, k$, the symmetric difference
in arcs between $T_{j-1}$ and $T_j$.

Why should these compact representation be any smaller than
$\Omega(n^2)$ in the worst case?  Assume for now that for any two
nodes $s$ and $t$, there is a unique shortest $s$-to-$t$
path.\footnote{This assumption is not used by the analysis in this
paper, just in this quick illustrative proof.}
Let $G$ be the input graph, and let $\overleftarrow{G}$ be the graph
obtained by reversing the orientation of each arc.  For each node
$v$ in $\overleftarrow{G}$, let $\overleftarrow{T_v}$ denote the
$v$-rooted shortest-path tree. Planarity implies the following
property:
\begin{quote}
For each arc $vw$ whose tail is $v$, the boundary nodes in
  the $w$-rooted subtree of $\overleftarrow{T_v}$ form a consecutive
  subsequence of the permutation cycle $(r_0\ r_1\ \cdots\ r_k)$.
\end{quote}
This implies a property for $G$:
\begin{quote}
For each arc $wv$ of $G$, the boundary nodes $r_j$ for which $T_j$
contains $wv$ form a consecutive subsequence of $(r_0\ r_1\ \cdots\ r_k)$.
\end{quote}
This property implies that, for each arc $wv$, there is at most one boundary
node $r_j$ such that $wv$ is in $T_{j-1}$ and not in $T_j$, and at
most one boundary node $r_j$ such that $wv$ is in $T_j$ and not in
$T_{j-1}$.  Thus the sum of sizes of the symmetric differences is
twice the number of arcs.

The problem of computing the diffs was called~\cite{Klein05}
\emph{multiple-source shortest paths (MSSP) 
  in planar graphs}, and gave an $O(n \log n)$ algorithm for it.
The algorithm is based on using a link-cut tree to represent a
spanning tree of the dual planar graph, namely the spanning tree
consisting of edges \emph{not} in the primal shortest-path tree.  The
algorithm iterates over the boundary nodes in order, performing
insertions and deletions to transform the $r_0$-rooted shortest-path
tree to the $r_1$-rooted tree, and thence to the $r_2$-rooted tree and
so on.  (We refer to such an arc insertion+deletion as a \emph{pivot}.)

The algorithm is quite simple but the analysis in that paper is quite complicated
and baroque and messy.  The purpose of this paper is to give a simple and clear
analysis.

The author showed that the output could be used to address other problems.
Given a set $S$ of pairs $(v_i, v)$ of nodes in which
the first node is a boundary node, there is an $O((|S|+n) \log n)$
algorithm to compute the $v_i$-to-$v$ distance for every pair $(v_i,
v)$ in $S$.  We call this \emph{batch processing}.
For example, if the number of boundary nodes is
$O(\sqrt{n})$ then all boundary-node-to-boundary-node distances
can be computed in $O(n \log n)$.

%optional
The author mentioned that, applying a partial-persistence
technique~\cite{DSST} to a link-cut tree representing the \emph{primal}
shortest-path tree, one obtains a data structure of size $O(n \log n)$
that supports origin-to-destination distance queries when the origin
is a boundary node.

\subsection{Other algorithms for MSSP}

A similar $O(n \log n)$ algorithm was subsequently proposed by Cabello and Chambers
and Erickson~\cite{CabelloC07,CabelloChambersErickson}.
Their algorithm runs in $O(g n \log n)$ time for graphs of genus~$g$.
 The Cabello et al. algorithm uses the
same structure as the Klein algorithm but uses a different rule for
selecting the next arc to insert into the shortest-path tree.
This rule makes the analysis much simpler.  However, in contrast to
the Klein algorithm, 
it requires the assumption that shortest paths are unique; the authors suggest
overcoming that assumption by use of randomization or by a
lexicographic tie-breaking rule.  (The latter increases the running
time by an $O(\log n)$ factor.)

Eisenstat and Klein~\cite{eisenstat2013linear} gave a simple
tie-breaking rule that does not increase the running time of Cabello
and Chambers and that removes the requirement of unique shortest paths;
however, the
argument is somewhat complicated.  They also
showed that the algorithm of Cabello et al. could be adapted to run in
$O(n)$ time when the costs are integers whose sum is $O(n)$.

Das, Kipouridis, Probst Gutenberg and Wulff-Nilsen~\cite{DKGW22}
gave an MSSP algorithm using an approach very different from that used by Klein
and by Cabello et al.  The compact representation of shortest-path
trees obtained is very different; it does not represent the diffs.
Their algorithm takes $O(n \log k)$ time where $k$ is the number of
boundary nodes.  It computes a data structure that supports
origin-to-destination queries in $O(\log k)$ time where the origin is
a boundary node.\footnote{
In work that is under review~\cite{KleinDragos}, these three algorithms were evaluated
experimentally on planar graphs derived from road networks.  In our
experiments, the
algorithm of Das et al. was an order of magnitude slower than the
other two.  The algorithm of Cabello et al. was slightly faster than
the algorithm of Klein; the primary reason is that sometimes Klein's
algorithm pivots into the tree an arc that is not in any shortest-path
tree rooted at a boundary node (and pivots it out later, before moving
to the next boundary node).}

\subsection{Uses of MSSP}

The MSSP algorithm has been used or adapted for many applications involving
 planar or bounded-genus graphs:
 in computing graph diameter~\cite{cabello2018subquadratic,li2019planar,chan2019faster},
 in computing many origin-to-destination distances~\cite{cabello2012many},
 in computing shortest paths in the presence of negative-cost arcs~\cite{KleinMozesWeimann2010,MozesWulffNilsen2010,karczmarz2024max},
in exact distance oracles~\cite{cohen2017fast,nussbaum2011improved,mozes2012exact,long2021planar,gawrychowski2018better,charalampopoulos2019almost,Boneh0MPW25,mozes2026distancesplanargraphsfree} and approximate distance oracles~\cite{Klein05,kawarabayashi2011linear,chan2019faster},
in computing distances with failing arcs~\cite{wulff2010solving,emek2010near,chechik2020simplifying,erickson2011computing} or vertices~\cite{charalampopoulos2022exact},
in computing a subgraph that exactly or approximately preserves distances~\cite{chang2018near},
in computing a minimum $s,t$ cut and a maximum $s,t$ flow in undirected planar graphs~\cite{italiano2011improved},
in global minimum cut in undirected~\cite{lkacki2011min,italiano2011improved} and directed~\cite{mozes2018minimum} planar graphs and in bounded-genus graphs~\cite{chambers2023minimum},
in computing an oracle for minimum $s,t$ cut~\cite{borradaile2015min,borradaile2016all,karczmarz2024max}
in computing girth~\cite{chang2013computing}, in edit distance~\cite{cassis2023optimal,charalampopoulos2025pattern},
in computing single-source all-sinks max flow~\cite{lacki2012single},
in maximum flow with multiple sources and sinks~\cite{borradaile2017multiple},
in combinatorial topology~\cite{de2010tightening,erickson2011shortest,erickson2011minimum,fox2013shortest,borradaile2017minimum},
in distance problems in continuous planar graphs~\cite{cabello2025algorithms},
in minimum-cost bipartite matching in planar graphs~\cite{asathulla2019faster,lahn2019faster},
in shortest vertex-disjoint paths in undirected planar graphs~\cite{borradaile2015towards},
and in replacement
paths~\cite{wulff2010solving,emek2010near,erickson2011computing,chechik2020simplifying}.

For all these applications, batch processing suffices and 
the Cabello et al. algorithm or the Das
et al. algorithm can be used
in place of the Klein algorithm. However, there is one important
application for which the Klein algorithm seems to be needed:
approximation schemes for Steiner-type problems in planar graphs.
Klein~\cite{Klein06} gave a polynomial-time approximation scheme
(PTAS) for Steiner Traveling Salesman (a.k.a. subset TSP), in which
one is given a subset of nodes and one seeks a tour that visits at
least those nodes.  That paper introduced a construction that was also
used in a paper by Borradaile, Klein, and
Mathieu~\cite{BorradaileKleinMathieu2009} giving a PTAS for Steiner
tree in planar graphs; the latter paper called this construction
\emph{strip decomposition}.

The naive algorithm for strip decomposition takes $\Theta(n^2)$
time.  However, Klein~\cite{Klein06} sketched an algorithm that adapts
the original MSSP algorithm to achieve $O(n \log n)$ time.  This
enables the approximation schemes for Steiner TSP and Steiner tree to
run in $O(n \log n)$ time instead of $O(n^2)$ time.  The algorithm
makes essential use of the pivot rule of~\cite{Klein05}.

The strip decomposition (the most complex step in what was called
\emph{brick decomposition}) came to be essential also to
near-linear realization of
approximation schemes for generalizations of Steiner tree: Steiner
two-edge connectivity~\cite{borradaile2008two},
prize-collecting Steiner tree and TSP~\cite{BateniCEHKM11},
and Steiner forest~\cite{EfficientSteinerForestPTAS2012}
(see~\cite{BateniHM11}).

Because of its role in strip decomposition and brick decomposition,
the original MSSP algorithm remains important.  However, as mentioned,
the original analysis is awful.  This paper gives a clear and
intuitive analysis of that algorithm.

\section{Background and terminology}

\subsection{Graphs and darts and edges}

Consider a directed graph $G$.  We posit for notational convenience
(as in some max-flow algorithms) that corresponding to each arc $uv$
there is an arc $vu$ in the opposite direction.  We refer to this as
the \emph{bidirectionality assumption}.
The two arcs can have different lengths, and in fact
one can have length infinity, so this assumption is without loss of
generality. 
We use $\rev(\cdot)$ to denote the function that maps
each arc to the oppositely oriented arc.
An arc is considered oriented from its tail to its head.
We denote the head and tail of an
arc or dart $a$ by $\head_G(uv)$ and $\tail_G(uv)$ respectively.

We use the term \emph{edge} when we do not want to distinguish between
the two orientations $uv$ and $vu$.\footnote{As the graph theorist
Berge wrote, ``It would be convenient to say that there are two
theories and two kinds of graphs: directed and undirected.  This is
not true.  {\em All graphs are directed,} but sometimes the direction
need not be specified.''}

\subsection{Rooted trees and paths}

For a rooted spanning tree $T$ in a graph.  We say $T$ is $r$-rooted
if $r$ is the root of $T$.  For a node $v$, we denote by $T[v]$ the
root-to-$v$ path in $T$.  For nodes $u$ and $v$, we denote by $T[u,v]$
the $u$-to-$v$ path in $T$.

The tree can be interpreted as a set of edges when we do not care
about orientation.  When we want to think of the tree as consisting of
paths from the root towards the leaves, we refer to the tree as
\emph{leafward}.  When we want to think of the tree as consisting of
paths from the leaves toward to the root, we use the term \emph{rootward}.

Let $uv$ be an arc such that the edge of $uv$ is not represented in $T$.
The \emph{fundamental cycle of $uv$}
is the simple cycle that consists of the arc $uv$ and the path $T[v,u]$.

For a path $P$ and nodes $u$ and $v$ such that $u$ precedes
$v$, we denote by $P[u,v]$ the $u$-to-$v$ subpath of $P$.  We denote
by $\pstart(P)$ and $\pend(P)$ the start and end of $P$.

A path is \emph{trivial} if its start and end are the same and it
consists of zero arcs.  For paths $P$ and $Q$ such that
$\pend(P)=\pstart(Q)$, we denote by $P\circ Q$ the path that first
traverses $P$ and then traverses $Q$, i.e. the concatenation of the
paths $P$ and $Q$.

\subsection{Lengths and shortest paths and slack costs}
\label{shortest-path-subsec}

For an assignment $\ell(\cdot)$ of \emph{lengths} (also called
\emph{costs}) to edges of the graph and a path or cycle $P$, we denote
by $\ell(P)$ the length of $P$, i.e. the sum of values $\ell(\cdot)$
assigns to the edges of $P$.

% todo: SHOULD THIS BE INCLUDED?
%% We assume no
%% cycle has negative length.  It follows from this assumption that removing cycles from a nonsimple
%% path does not increase its length.  Hence if there is a shortest
%% $u$-to-$v$ path, there is one that is simple.

For any labeling $\phi(\cdot)$ of the nodes of a graph with
numbers, there is a new length assignment $\ell_\phi$ defined by
\begin{equation}
\label{reduced-length}
\ell_\phi(uv) = \phi(u) +\ell(uv) - \phi(v)
\end{equation}

 We refer to the lengths $\ell_\phi(uv)$ as {\em slack costs with
  respect to $\phi$}.  For any nodes $s$ and $t$,  for any $s$-to-$t$
  path $P$,
\begin{equation}\label{path-reduced-length}
\ell_\phi(P)
= \ell(P) + d(s) - d(t)
\end{equation}
It follows that an $s$-to-$t$
path is shortest (i.e. is a minimum-cost $s$-to-$t$ path) according to
$\ell_\phi$ if and only if it is shortest according to $\ell$.

An arc $uv$ is a {\em relaxed arc} with respect to $\phi$ if its
reduced length is nonnegative, i.e. if
\begin{equation} \label{relaxed-eqn}
\phi(u) + \ell(uv) \geq \phi(v) 
\end{equation}
and is a {\em tense arc} otherwise.
% todo: retain this?
%%   An arc is {\em tight}
%% with respect to $\phi$
%% if its reduced length is zero, i.e. if~(\ref{relaxed-eqn}) is
%% satisfied with equality.

Suppose that every arc is relaxed and, for some nodes $r$ and $v$,
$\phi(r)=0$ and $\phi(v)$ is the length of {\em some} $r$-to-$v$ path $P_v$.
It follows by~(\ref{path-reduced-length}) that $\ell_\phi(P_v) = 0$.
Every arc and hence every $r$-to-$v$ path has nonnegative reduced
length, so $P_v$ is a shortest path with respect to 
the reduced lengths $\ell_\phi$, and hence also with respect to the
original lengths $\ell$.

Let $T$ be an $r$-rooted leafward spanning tree.  We define the
labeling $\phi_T(\cdot)$ by $\phi_T(v) = \ell(T[v])$.  We then define
the slack costs with respect to $T$ as the slack costs with respect to 
$\phi_T(\cdot)$, and we accordingly define an arc to be relaxed or
tense with respect to $T$.

\subsection{Planar embedded graphs}

It is convenient to work with \emph{combinatorial} embeddings.
Under
the bidirectionality assumption and the assumption that there are no
isolated nodes, we define an embedding of a graph $G=(V, A)$ to be a
permutation $\pi: A \longrightarrow A$ whose set of orbits is
$\set{\set{\text{arcs with head} v}\ : \ v\in V}$.
For each arc $uv$, $\pi(uv)$ is another arc $wv$ with the same head $v$;
in particular, it is the \emph{next} arc as one encircles the node $v$
in the counterclockwise direction.  This definition is useful when
defining or describing properties related to embeddings, e.g. see
Figure~\ref{fig:interval}.
Using combinatorial embeddings, one can formally define faces,
the planar dual, what it means for a cycle to enclose a face, edge, or
node.  See~\cite{planarity} for formal definitions and results.

One such fundamental result:
\begin{quote}
  If $T$ is a spanning tree of planar embedded graph $G$,
  the set of edges of the planar dual that are not in $T$ form a
  spanning tree $T^*$ of the dual $G^*$.
\end{quote}
This result is central to the MSSP algorithm.

\section{The algorithm}

Let the nodes of the boundary, in counterclockwise order about the
graph, be $r_0 r_1 \ldots r_k$.  The algorithm maintains a tree $T$
rooted at one of the boundary
nodes, and the set $T^*$ of arcs of the graph that do not belong to
$T$ (which is a spanning tree of the dual planar graph).
We interpret $T^*$ as rooted at the infinite face, oriented towards
the root (so a nonroot node is farther in $T^*$ from the root than its
parent).

%% Initially $T$ is an $r_0$-rooted shortest-path
%% tree.\footnote{In the original algorithm description of~\cite{Klein05}, it was
%% required to be the \emph{rightmost} shortest-path tree but the new
%% analysis does not require this.}

\noindent At a high level, the algorithm is as follows:
\medskip

\noindent \begin{ftabbing}
initialize $T := $ $r_0$-rooted shortest-path tree\\
for $j=0,1, \ldots, k-1$:\\
\quad \= (\textbf{special pivot}:) eject from $T$ the arc whose head
is $r_{j+1}$, and inject the arc $r_{j+1} r_j$\\
      \> while there is a tense arc with respect to $T$:\\
      \> \quad \= (\textbf{leafmost rule}:) let $a$ be a tense arc,
      chosen to be leafmost in $T^*$ among tense arcs.\\
      \>       \> (\textbf{ordinary pivot}:) eject from $T$ the arc
        whose head is the head of $a$ in $G$, and inject $a$.
\end{ftabbing}
\medskip

\noindent The algorithm performs two kinds of pivots.  An \emph{ordinary} pivot,
given an arc $uv$ not in $T$, removes the arc in $T$ whose head is
$v$, and inserts $uv$.  (This requires that $u$ not be a descendant of
$v$ in $T$ before the pivot.)  A \emph{special} pivot transforms a tree
$T$ rooted at one boundary node $r$ into a tree rooted at the next
boundary node $r'$ in counterclockwise order around the graph.  It consists
of removing the arc of $T$ whose head is $r'$ and adding the arc whose
tail is $r'$ and whose head is $r$.

When the while-loop terminates, because there are no arcs that are tense with
respect to the current tree $T$, that tree is a shortest-path tree.

The spanning tree $T$ of the input graph is maintained using an array
$\pred[]$ that, for each nonroot node $v$, stores the arc $\pred[v]$
of $T$ whose head is $v$.  The spanning tree $T^*$ of the planar dual
is maintained using a link-cut tree~\cite{ST83,planarity,eisenstat2014dtree}.  As a
consequence, as described in~\cite{Klein05}, each iteration can be
carried out in amortized $O(\log n)$ time.  The analysis consists in
showing that each arc is injected at most once, so the number of
iterations is bounded by the number of arcs.  (Note that some
injections involve arcs that belong to no shortest-path tree.)
Assuming the input graph does not have parallel edges, the number of arcs is $O(n)$ so the
running time is $O(n \log n)$.

How to show that each arc is injected at most once?  Recall the the
illustrative proof
in Section~\ref{sec:problem}, in which we showed that, for each node
$v$, for each arc $wv$ the boundary nodes $r_j$ for which $T_j$
contains $wv$ form a consecutive subsequence of the permutation cycle
of boundary nodes.  The analysis of the algorithm shows essentially that for each
node $v$, as the algorithm progresses, the arcs $wv$ entering $v$ get
``used up'' (or become \emph{defunct}) in counterclockwise order about
$v$, and an arc, once it becomes defunct, does not get injected into
the shortest-path tree $T$.

\section{Analysis of the MSSP algorithm}

\subsection{The set-up}

\begin{theorem} \label{thm:summary}
  Each arc enters $T$ at most once.
\end{theorem}  

In the analysis that proves Theorem~\ref{thm:summary}, in each
iteration some arcs are designated as \emph{defunct}.  Initially no
arc is defunct.  Once an arc becomes defunct, it remains defunct for
the remainder of the algorithm's execution.  We give an invariant and
show that the tree $T$ maintained by the algorithm satisfies the
invariant throughout execution.  The rule for designating arcs as
defunct implies that, when an arc is injected into $T$, it becomes
defunct.  To prove the theorem, we use the invariant to show that no defunct arc is
injected into $T$ by (1) an ordinary pivot
(Lemma~\ref{lem:inv:defunct})) or by a special pivot
(Lemma~\ref{lem:special-pivot-not-defunct}).

\begin{invariant} \label{inv:between}
For any path $P$ starting at the root of $T$ and consisting of arcs
  of $T$ and defunct arcs, $\length(T[\pend(P)]) \leq \length(P)$.
\end{invariant}

For $i=0, 1, \ldots$, we denote by $T_i$ the tree $T$ obtained after
$i$ iterations.  We show by induction on $i$ that $T_i$ satisfies the invariant.
The initial tree $T_0$ trivially satisfies the invariant
because initially no arcs are defunct.
Assume the invariant hold during the first $i$ iterations.  We will show it holds
after $i+1$ iterations.

Presently we describe the rule governing which arcs become defunct in
an iteration.  First, some notation.  For distinct darts $a$ and
$\bar a$ with the same head, as illustrated in
Figure~\ref{fig:interval}, we use $(a,\bar a]$ to denote the set of
darts $\pi^k(a)$ where $k$ is positive and is no more than the smallest
positive integer $\ell$ such that $\pi^\ell(a)=\bar a$.

\begin{figure} \centering
  \includegraphics[scale=.5]{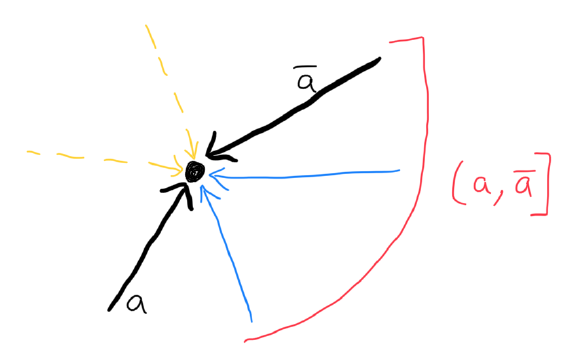}
  \caption{This figure illustrates the definition of $(a,\bar a]$
    where $a$ and $\bar a$ are darts with a common head $v$.  The
    embedding defines a cyclic ordering of darts with head $v$, and
    $(a,\bar a]$ denotes the set consisting of darts that come after
    $a$ and before $\bar a$ (including $\bar a$).}
    \label{fig:interval}
  \end{figure}

\paragraph{Rule for becoming defunct:} Suppose the current tree is $T$
and an ordinary pivot replaces an arc $a$ of $T$ with an arc $\bar a$.  As
illustrated in Figure~\ref{fig:defunct}, an arc $b$ not in $T$ becomes defunct in
this iteration if $T$ contains a (possibly trivial) path $P$ of arcs
with $\pend(P)=\head(b)$ such that the last dart of $b \circ \rev(P)$
belongs to $(a,\bar a]$.  We say that $P$ is the \emph{witness path}.
Note in particular that $\bar a$ becomes defunct after the iteration.

For a special pivot that injects an arc $\bar a$, the only arc that
becomes defunct is $\bar a$.

\begin{figure}
  \centering
  \includegraphics{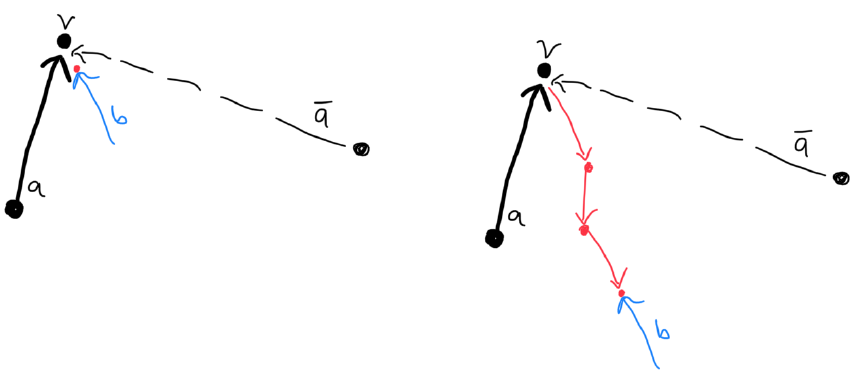}
  \caption{This figure illustrates when an arc $b$ becomes \emph{defunct} during
    an iteration.  The iteration ejects arc $a$ from the tree, and
    injects arc $\bar a$ into the tree.  In the diagram on the left,
    the arc $b$ is such that, in the counterclockwise cycle of arcs
    with head $v$, starting with arc $a$, the arc $b$ comes before the
    arc $\bar a$.  (Formally, $\pi^\ell(a) = b$, $\pi^d(a) = \bar a$, and
    $\ell \leq d$.)  In the diagram on the right, there is a nonempy
    \emph{witness path} of arcs belonging to the tree; the last arc of
    the witness path has the same head as arc $b$, and, starting with
    arc $a$, the reverse of the first arc of the witness path comes
    before the arc $\bar a$ in the counterclockwise cycle of arcs
    with head $v$.}
\label{fig:defunct}
\end{figure}

\subsection{Ordinary pivots}

Suppose $T_{i+1}$ is obtained from
$T_i$ by an ordinary pivot that injects arc $xy$.  By the injection
rule, $xy$ is a leafmost arc that is tense with respect to $T_i$.

\begin{lemma} \label{lem:inv:defunct} $xy$ was not defunct by the
  beginning of iteration $i+1$.
\end{lemma}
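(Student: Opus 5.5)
The plan is to argue by contradiction, using Invariant~\ref{inv:between} for $T_i$, which holds by the induction hypothesis. Suppose $xy$ was defunct by the beginning of iteration $i+1$. I would try to show that $xy$ is then not tense with respect to $T_i$, which contradicts the injection rule. The natural candidate path is $P = T_i[x]\circ xy$. It starts at the root of $T_i$ and consists of arcs of $T_i$ followed by the defunct arc $xy$. Invariant~\ref{inv:between} then gives
\[
\length(T_i[y]) \leq \length(T_i[x]) + \length(xy),
\]
which says exactly that $\phi_{T_i}(x)+\ell(xy)\geq \phi_{T_i}(y)$, i.e.\ $xy$ is relaxed with respect to $T_i$.

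This almost finishes the argument. The only subtlety is that Invariant~\ref{inv:between} applies to any path from the root built from tree arcs and defunct arcs. So I only need to check that $T_i[x]$ is a genuine path from the root of $T_i$, which it is, since $T_i$ is a spanning tree rooted at the current boundary node. I also need the concatenation to be admissible. The invariant does not require simplicity, but if it did, the following would handle it. If $y$ lay on $T_i[x]$, then $T_i[y]$ would be a prefix of $T_i[x]$. In that case tenseness would give $\length(T_i[x]) + \length(xy) < \length(T_i[y])$, so the cycle $T_i[y,x]\circ xy$ would have negative length. I would rule this out either by an assumption of no negative cycles or by applying the invariant directly to the nonsimple path.

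I expect the main obstacle to be a timing issue rather than the inequality itself. Is Invariant~\ref{inv:between} available for $T_i$ with the set of arcs that are defunct \emph{at the beginning} of iteration $i+1$? The induction hypothesis states that the invariant holds after $i$ iterations, with the defunct designations made in those iterations. That is precisely the set relevant to the phrase ``defunct by the beginning of iteration $i+1$''. The new defunct arcs designated during iteration $i+1$, which include $xy$ itself under the rule for becoming defunct, must not be counted. I would state this bookkeeping explicitly. Once it is in place, the lemma follows directly from the invariant and the definition of a tense arc. Notably, the leafmost rule plays no role here; I expect it to matter only in the subsequent step that re-establishes the invariant for $T_{i+1}$.
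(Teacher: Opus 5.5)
Your proposal is correct and is essentially the paper's proof: apply Invariant~\ref{inv:between} for $T_i$ to the path $T_i[x]\circ xy$ to conclude that $xy$ is relaxed with respect to $T_i$, which contradicts the tenseness required by the injection rule. Your extra remarks, on whether the path need be simple and on which set of defunct arcs is in force, are sound bookkeeping that the paper leaves implicit.
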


\begin{proof} Assume otherwise.  Then, by
  Invariant~\ref{inv:between} applied to $T_i$,
  $\length(T_i[x] \circ xy) \geq \length(T_i[y])$.  However, the
  injection rule implies that $xy$ is tense with respect to $T_i$, so
  $\length(T_i[x] \circ xy) < \length(T_i[y])$.  This is a
  contradiction.
\end{proof}

Let $\bigtriangledown$ be the fundamental cycle of $xy$ with respect
to $T_i$, oriented so that $xy$ is forward.  By leafmostness, every
arc strictly enclosed by $\bigtriangledown$ is relaxed with respect to
$T_i$.

\begin{lemma} \label{lem:auxiliary} There is no suffix of $T_0[y]$ that starts on
  $T_i[x]$ and is enclosed by $\bigtriangledown$ and excludes $xy$.
\end{lemma}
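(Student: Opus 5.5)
The plan is to argue by contradiction, comparing lengths along the suffix against the tense arc $xy$. Suppose some suffix $Q$ of $T_0[y]$ starts at a node $z$ of $T_i[x]$, ends at $y$, is enclosed by $\bigtriangledown$, and does not contain $xy$. The approach has three steps.

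First, $Q$ is a shortest $z$-to-$y$ path, since $T_0$ is a shortest-path tree and subpaths of shortest paths are shortest. On the other hand, $T_i[z,x]\circ xy$ is also a $z$-to-$y$ path, because $z$ lies on $T_i[x]$. So it suffices to show
\[
\length(T_i[z,x]\circ xy) < \length(Q).
\]

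Second, I show that every arc of $Q$ has nonnegative slack cost with respect to $T_i$. Each arc of $Q$ is enclosed by $\bigtriangledown$, so it is in one of three classes.
\begin{enumerate}[label=(\roman*)]
\item It is strictly enclosed. Then it is relaxed by leafmostness, as noted just before the lemma.
\item It is an arc of $\bigtriangledown$ itself. The only arc of $\bigtriangledown$ not in $T_i$ is $xy$, which $Q$ excludes. Every arc of $T_i$ has slack cost exactly $0$ by the definition of $\phi_{T_i}$.
\item It is the reverse $vu$ of a tree arc $uv$ on $\bigtriangledown$. Its slack cost is $\phi_{T_i}(v)+\length(vu)-\phi_{T_i}(u) = \length(uv)+\length(vu)$, which is nonnegative because the two-arc cycle has nonnegative length. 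This is the standing assumption that no cycle is negative, which is needed for shortest paths to be well defined.
\end{enumerate}
Summing slack costs along $Q$ and using (\ref{path-reduced-length}) then gives
\[
\length(T_i[z]) + \length(Q) \ge \length(T_i[y]).
\]

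Third, tenseness of $xy$ with respect to $T_i$ says $\length(T_i[x]) + \length(xy) < \length(T_i[y])$. Since $T_i[x] = T_i[z]\circ T_i[z,x]$, this reads $\length(T_i[z]) + \length(T_i[z,x]\circ xy) < \length(T_i[y])$. Combining with the previous inequality yields $\length(T_i[z,x]\circ xy) < \length(Q)$, contradicting the minimality of $Q$ from the first step.

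The main obstacle is the second step. I must make precise, via the combinatorial definition of enclosure, that an arc of $Q$ which is enclosed but not strictly enclosed lies on $\bigtriangledown$ in one of the two orientations. I then need to handle the reversed tree arcs on the cycle separately, since leafmostness only covers strictly enclosed arcs. A secondary point to check is the case where $z$ lies on the common prefix of $T_i[x]$ and $T_i[y]$, or $Q$ is trivial. The argument above is uniform in $z$. If $Q$ is trivial then $z=y$ lies on $T_i[x]$, which is impossible because the pivot requires $x$ not to be a descendant of $y$.
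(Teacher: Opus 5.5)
Your proposal is correct and follows essentially the same argument as the paper. Every arc of the suffix has nonnegative slack with respect to $T_i$, while $xy$ has negative slack, so $T_i[z,x]\circ xy$ is strictly shorter than a subpath of the shortest-path tree $T_0$. Your explicit handling of arcs that are enclosed but not strictly enclosed (tree arcs on $\bigtriangledown$ with zero slack, and their reverses with slack $\ell(uv)+\ell(vu)\ge 0$) fills in a detail that the paper's proof glosses over by invoking leafmostness alone; the one leftover arc, $yx$, cannot lie on $Q$ because $Q$ is a simple tree path ending at $y$.
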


\begin{proof} The proof is illustrated in
  Figure~\ref{fig:no-suffix}. Suppose $P$ were such a suffix.  Since
  $P$ is enclosed by $\bigtriangledown$ and does not include $xy$ and
  since $xy$ is a leafmost tense arc, every arc of $P$ is relaxed
  with respect to $T_i$.  The path $T_i[\pstart(P),x]$ consists of
  arcs of $T_i$, which have zero slack cost with respect to $T_i$.
  The arc $xy$ has negative slack cost with respect to $T_i$.  Thus
  the total slack cost of $T_i[\pstart(P),x]\circ xy$ is negative, so
  this path is shorter than $P$.  This contradicts the fact that $P$
  is part of a shortest-path tree.
\end{proof}

\begin{figure} \centering
  \includegraphics[height=2.5in]{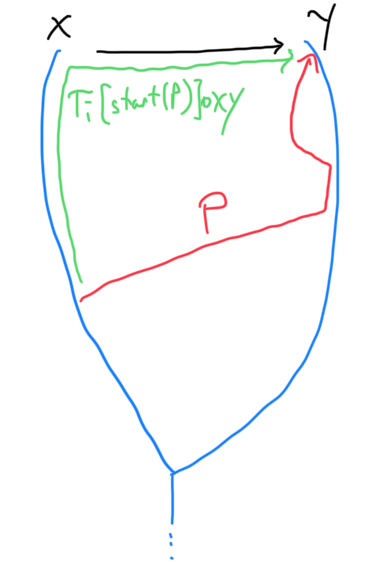}
  \caption{Assume there is a suffix $P$ of $T_0[y]$ that starts on
    $T_i[x]$ and is enclosed by $\bigtriangledown$ and does not
    include $xy$.  The path $T_i[\pstart(P), x]\circ xy$ is
    shorter.}

  \label{fig:no-suffix}
\end{figure}

\begin{lemma} \label{lem:CCW} 
 $\bigtriangledown$ is counterclockwise.
\end{lemma}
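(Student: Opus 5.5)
We argue by contradiction. Suppose $\bigtriangledown$ is clockwise, and use Lemma~\ref{lem:auxiliary}. The root of $T_i$ is a boundary node $r_j$, so it lies on the infinite face and is not strictly enclosed by $\bigtriangledown$. The cycle $\bigtriangledown$ consists of $xy$ followed by $T_i[y,x]$. Let $z$ be the lowest common ancestor of $x$ and $y$ in $T_i$. Then $\bigtriangledown = T_i[z,x]\circ xy \circ \rev(T_i[z,y])$, read in the direction that makes $xy$ forward.

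The first step is to use the orientation to locate the subtree of $y$. When $\bigtriangledown$ is clockwise, the region it encloses lies to its right. At $y$, the ejected arc $a=\pred[y]$ (the last arc of $T_i[y]$, whose head is $y$) lies on $\bigtriangledown$. Because $xy$ is injected and $a$ is ejected, the rotation at $y$ then places the interior on the side that is reached from $T_0$'s direction. We want to conclude that, in the clockwise case, some shortest path from $r_0$ (the root of $T_0$, a boundary node) must enter the interior of $\bigtriangledown$ and reach $y$ from inside.

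The second step is to trace $T_0[y]$ backwards from $y$. This path starts at $r_0$, which is outside or on $\bigtriangledown$, and ends at $y$. Consider its maximal suffix that is enclosed by $\bigtriangledown$, and let $w$ be where this suffix begins. By the Jordan curve property, $w$ is a node of $\bigtriangledown$. We must then show three things:
\begin{itemize}
\item[(i)] $w$ lies on $T_i[x]$ (specifically on $T_i[z,x]$, or on the common part $T_i[z]$). If instead $w$ were on $T_i[z,y]$ strictly, we would instead shortcut using tree arcs; in fact we need $w$ on the $x$-side.
\item[(ii)] The suffix excludes $xy$. If $T_0[y]$ ended with $xy$, then $xy$ would be in $T_0$; we handle this separately, possibly using the tenseness of $xy$ together with $T_0$ being shortest.
\item[(iii)] The suffix is enclosed by $\bigtriangledown$.
\end{itemize}
With these, the suffix contradicts Lemma~\ref{lem:auxiliary}.

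The main obstacle is the geometric step connecting clockwise orientation to the claim that $T_0[y]$ must arrive at $y$ through the interior from the $x$-side of the cycle. I would try an earlier induction hypothesis here, namely that each tree is obtained from the previous one by pivots consistent with the counterclockwise sweep of roots. More concretely, I would use the fact that the path $T_i[z,y]$ together with a suffix of $T_0[y]$ could otherwise be uncrossed, as in the standard planar non-crossing argument for shortest paths. Case (ii), where $xy\in T_0$, also needs care, and I would handle it with Invariant~\ref{inv:between}.
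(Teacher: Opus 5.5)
\textbf{There is a genuine gap, at the step you yourself call the main obstacle.} Your plan is to reach a contradiction at time $0$: use the clockwise assumption to show that the maximal enclosed suffix of $T_0[y]$ starts on the $x$-side $T_i[z,x]$, then invoke Lemma~\ref{lem:auxiliary}. But Lemma~\ref{lem:auxiliary} never uses the orientation of $\bigtriangledown$. What it gives you is the opposite of your step (i): the enclosed suffix of $T_0[y]$ starts at some $v_q$ with $q\ge 1$, where $T_i[z,y]=b_1\cdots b_s$ and $v_\ell=\head(b_\ell)$.

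Nothing ties $T_0$ directly to the orientation of a cycle built from $T_i$. Your fallback, to ``shortcut using tree arcs'' when $w$ lies strictly on $T_i[z,y]$, yields nothing: a suffix of $T_0[y]$ and $T_i[w,y]$ are just two $w$-to-$y$ paths, and neither need be shorter. The clockwise and counterclockwise cases differ only in what happens during iterations $1,\ldots,i$. So no argument that looks only at $T_0$ and $T_i$ can succeed. Step (i) is not a lemma you can supply separately; it is the whole content of the statement.

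\textbf{The missing idea is an induction over the iterations $k=0,1,\ldots,i$.} Let $P_k$ be the maximal suffix of $T_k[y]$ enclosed by $\bigtriangledown$; its start lies on $\bigtriangledown$. Lemma~\ref{lem:auxiliary} is only the base case: $\pstart(P_0)\in\{v_1,\ldots,v_s\}$. For the inductive step, let $a$ be the arc of $T_k[y]$ just before $P_k$, and let $\bar a$ be the arc injected in iteration $k+1$.
\begin{itemize}
\item If $\bar a$ enters $a\circ P_k$ from the left at a node $v$, the defunct rule with witness path $P_k[v,y]$ makes $xy$ defunct. This contradicts Lemma~\ref{lem:inv:defunct}, which is where Invariant~\ref{inv:between} is used. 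The clockwise assumption is what places the part of the interior adjacent to $T_i[z,x]$ on the left of $P_k$. Hence any rerouting of $T_k[y]$ through that part must be such an entry from the left.
\item Otherwise, either $\bar a$ is not enclosed, so its head lies on $\bigtriangledown$ and is one of $v_{q(k)},\ldots,v_s$; or $\bar a$ is enclosed, and $T_k[\tail(\bar a)]$, which cannot cross the tree path $P_k$, last enters $\bigtriangledown$ at one of $v_{q(k)},\ldots,v_s$.
\end{itemize}
Thus $\pstart(P_k)$ never leaves $\{v_1,\ldots,v_s\}$, which contradicts $\pstart(P_i)=z$.

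So the contradiction arises at time $i$, not time $0$. What turns the clockwise orientation into a contradiction is the defunct rule together with Lemma~\ref{lem:inv:defunct}, not a shortest-path uncrossing argument.
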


\begin{proof} Let $z$ be the leafmost common ancestor of $x$ and $y$
  in $T_i$.
Write $T_i[z,y] = b_1 b_2 \cdots b_s$, and, for $\ell=1, \ldots, s$, let $v_\ell =
\head(b_\ell)$.  (Thus $v_s=y$.)

For each nonnegative integer $k$, define $P_k$ to be the maximal
suffix of $T_k[y]$ that is enclosed by $\bigtriangledown$. Because
$T_k[y]$ starts on the boundary of the graph, it cannot be strictly
interior to $\bigtriangledown$.  It follows that
$\pstart(P_k)$ must be a vertex of $\bigtriangledown$.

Assume for a contradiction that $\bigtriangledown$ is CW.  We prove
the following claim, which implies that, for every $k$,  $P_k\neq T_i[z,y]$,
which contradicts the fact that $P_i=T_i[z,y]$ by definition of $P_i$.

\begin{claim} For each $k$, there is an integer $q(k)\in \set{1, \ldots, s}$
  such that $\pstart(P_k)=v_{q(k)}$.
\end{claim}

The proof is by induction on $k$.  Consider the case $k=0$.
Lemma~\ref{lem:auxiliary} shows that $\pstart(P_0)$ does not belong to
$T_i[z,x]$.  This, together with the fact that $\pstart(P_0)$ must be
a vertex of $\bigtriangledown$, shows that the claim holds for $k=0$.

\begin{figure} \centering
  \includegraphics{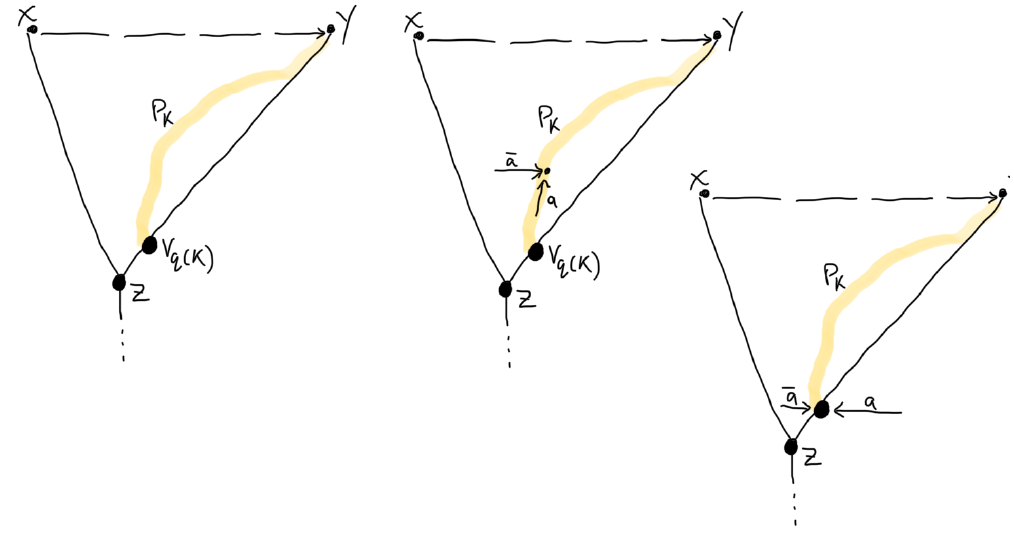}
  \caption{This figure illustrates an argument in the proof of the
    claim.  The left figure shows the cycle $\bigtriangledown$, and $P_k$,
    the maximal suffix of $T_k[y]$ consisting of arcs enclosed by
    $\bigtriangledown$.  In the middle figure, we see the case in
    which iteration $k+1$ injects an arc $\bar a$ that enters $P_k$
    strictly from the left; in this case, the ejected arc $a$ is an
    arc of $P_k$.  Let $v$ be the head of $\bar a$;
    then $P_k[v,y]$ is a witness path showing that the arc $xy$
    becomes defunct in the iteration that injects $\bar a$.
    In the rightmost figure, the same
    argument is illustrated, but where the ejected arc $a$ is the arc
    of $T_k[y]$ immediately preceding $P_k$.}
  \label{fig:defunct-argument}
\end{figure}

\begin{figure} \centering
  \includegraphics[scale=1]{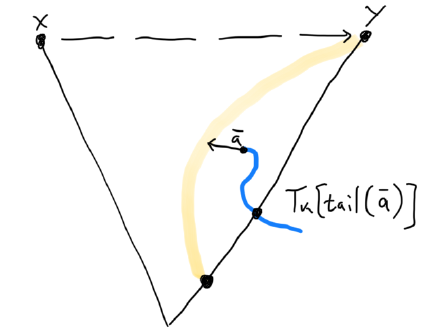}
  \caption{This figure illustrates another argument in the proof of
    the claim.  Iteration $k+1$ injects an arc $\bar a$ that does not
    enter $P_k$ from the left.  In the diagram, $\bar a$ is enclosed
    by $\bigtriangledown$.  In this case, we note that the path
    $T_k[\tail(\bar a)]$ must enter $\bigtriangledown$ at some node in
    $\set{v_{q(k)}, v_{q(k)+1}, \ldots, v_s}$.}
  \label{fig:cannot-cross-Pk}
\end{figure}

Assume the claim holds for $k$.  In particular, $\pstart(P_k)$ is not
the root of $T_k$, so $T_k[y]$ contains some arc $a$ immediately
before $P_k$.  If $\pstart(P_{k+1})=\pstart(P_k)$
then the claim holds for $k+1$, so assume that in iteration $k+1$ an
arc $\bar a$ is injected that results in
$\pstart(P_{k+1})\neq\pstart(P_k)$.  If $\bar a$ enters $a\circ \bar P_k$
from the left at some vertex $v$ then, as illustrated in
Figure~\ref{fig:defunct-argument}, the arc $xy$ would be defunct (with
witness path $P_k[v,y]$), contradicting Lemma~\ref{lem:inv:defunct}.

Suppose therefore that $\bar a$ does not enter $a\circ \bar P_k$
from the left.  If $\bar a$ is
not enclosed by $\bigtriangledown$, the fact that its head \emph{is}
enclosed implies that its head is on the boundary of
$\bigtriangledown$ and that therefore its head is one of $v_{q(k)},
v_{q(k)+2}, \ldots, v_s$, hence the claim holds for
$k+1$.  If $\bar a$ is enclosed by $\bigtriangledown$, it is enclosed
by the cycle $P_k \circ \rev(T_i[v_{q(k)}, y])$.  As illustrated in
Figure~\ref{fig:cannot-cross-Pk}, the path
$T_k[\tail(\bar a)]$ cannot cross $P_k$ because $P_k$ is part of
$T_k$, so that path last enters $\bigtriangledown$ at some vertex in
$\set{v_{q(k)}, v_{q(k)+1}, \ldots, v_s}$, proving the claim holds for
$k+1$.  This completes the induction, proving the claim, which
completes the proof of Lemma~\ref{lem:CCW}.
\end{proof}

\begin{figure} \centering
  \includegraphics[scale=1]{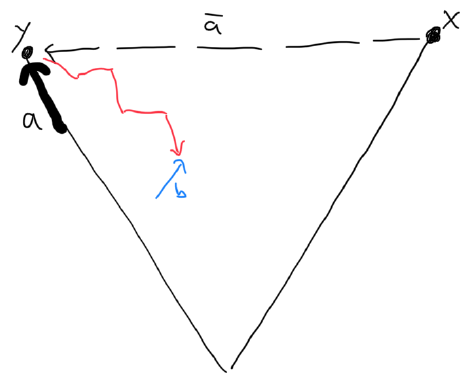}
  \caption{This figure shows that the arcs defunct during iteration
    $i+1$ are enclosed by $\bigtriangledown$.}
  \label{fig:defunct-enclosed}
\end{figure}

\begin{corollary} \label{cor:defunct-enclosed} During iteration $i+1$,
  the arcs that become defunct other than $xy$ are strictly enclosed by $\bigtriangledown$. 
\end{corollary}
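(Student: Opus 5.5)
We unwind the rule for becoming defunct in the ordinary pivot of iteration $i+1$, which injects $\bar a = xy$ and ejects $a$, the arc of $T_i$ with head $y$. Note $a$ is the last arc of $T_i[z,y]$, i.e.\ $a=b_s$ in the notation of the proof of Lemma~\ref{lem:CCW}. An arc $b \ne xy$ becomes defunct if $b\notin T_i$ and there is a (possibly trivial) path $P$ of $T_i$ with $\pend(P)=\head(b)$ such that the last dart of $b\circ\rev(P)$ lies in $(a,xy]$.

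Since $P$ consists of tree arcs, $\rev(P)$ ends at $y$ by a dart whose reverse is a tree arc leaving $y$, or $P$ is trivial and $\head(b)=y$. In either case the last dart $d$ of $b\circ \rev(P)$ has head $y$ and lies strictly between $a$ and $xy$ in counterclockwise order about $y$ (or equals $xy$, which happens only if $b=xy$).

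The key geometric step uses Lemma~\ref{lem:CCW}. Because $\bigtriangledown$ is counterclockwise and passes through $y$ entering via $xy$ and leaving via $\rev(a)$, the darts with head $y$ lying in $(a,xy)$ in counterclockwise order are exactly the ones on the interior side of $\bigtriangledown$ at $y$. This orientation convention is where Lemma~\ref{lem:CCW} is used, and I would verify it carefully against the definition of $\pi$ (the next dart counterclockwise about the head). So the edge of $d$ is strictly enclosed by $\bigtriangledown$.

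Next, $\rev(P)$ is a path of $T_i$ edges that starts at $\head(b)$ and ends with an edge strictly inside $\bigtriangledown$. The edges of $\bigtriangledown$ other than $xy$ are $T_i[z,x]$ and $T_i[z,y]$, which are tree edges. A tree path whose final edge is strictly inside cannot cross $\bigtriangledown$, since crossing means entering through a vertex of $\bigtriangledown$ in a way that would force a cycle in $T_i$. I would argue it more carefully as follows. Walk backward along $\rev(P)$ from $y$. If the walk ever reaches a vertex of $\bigtriangledown$ other than $y$ and continues outside, then the portion from that vertex to $y$ together with the tree path on $\bigtriangledown$ between the same endpoints forms a cycle in $T_i$, unless the portion coincides with that tree path. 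But $P$'s first edge at $y$ is interior and hence not on $\bigtriangledown$, so no such coincidence is possible. Hence all of $P$ lies strictly inside, except possibly at vertices on $\bigtriangledown$ with no outside continuation. Consequently $\head(b)$ is strictly enclosed or lies on $\bigtriangledown$. The arc $b$ is then attached to $\head(b)$ on the interior side: if $P$ is nontrivial, $b$ sits at $\head(b)$ in the interior region by the same planarity argument; if $P$ is trivial, $b=d$ directly.

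The main obstacle is the case where $\head(b)$ lies on $\bigtriangledown$. There we must show that $b$ emanates into the interior rather than the exterior and is not itself an edge of $\bigtriangledown$. The latter holds since $b\notin T_i$ and $b\ne xy$. For the former, I would use the rotation system: the darts into $\head(b)$ between the two $\bigtriangledown$ edges on the interior side are exactly those enclosed. The condition that $d\in(a,xy]$, propagated back along the interior path $P$ as in the Figure~\ref{fig:defunct-enclosed} picture, places $b$ in that interior wedge.
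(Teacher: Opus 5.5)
Your route is the paper's: Lemma~\ref{lem:CCW} places the darts of $(a,xy)$ in the interior wedge at $y$, and a trivial witness path puts $b$ in that wedge directly. For a nontrivial witness path, the tree structure of $T_i$ keeps $P$ inside $\bigtriangledown$. As written, however, the proof is not finished. From your walk argument you conclude only that $P$ never ``continues outside'' after touching $\bigtriangledown$, which leaves open the case $\head(b)\in\bigtriangledown$. You call this the main obstacle and dispose of it by an unspecified ``propagation'' of the condition $d\in(a,xy]$ back along $P$, which is not an argument. The reason you give there for $b$ not being an edge of $\bigtriangledown$ (``$b\notin T_i$ and $b\neq xy$'') is also insufficient. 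The reverse of a tree arc of $\bigtriangledown$, or $yx$, is not an arc of $T_i$ and differs from $xy$, yet its edge lies on $\bigtriangledown$.

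The fix is already in your cycle argument, which never uses the hypothesis ``and continues outside.'' Suppose a nontrivial witness path $P$ (which starts at $y$) reached a vertex $w\neq y$ of $\bigtriangledown$. Then $P[y,w]$ and the $y$-to-$w$ subpath of the tree path $\bigtriangledown - xy$ would be two $y$-to-$w$ paths in $T_i$. They are distinct because the first edge of $P$ is the edge of $d$, which lies strictly inside the wedge and so is not an edge of $\bigtriangledown$. So $P$ meets $\bigtriangledown$ only at $y$; this is exactly the paper's step ``because $T_i$ is a tree, no vertex of $P$ except its first is a vertex of $\bigtriangledown$.'' Hence every later vertex of $P$, in particular $\head(b)$, is strictly enclosed, and $b$, whose head is strictly inside, is strictly enclosed. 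Consequently $\head(b)$ can lie on $\bigtriangledown$ only when the witness path is trivial, and then $b=d\in(a,xy)$ is already handled by your wedge step. Strengthen the conclusion of the walk argument this way and delete the last paragraph; the proof is then complete and coincides with the paper's.
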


\begin{proof} As illustrated in Figure~\ref{fig:defunct-enclosed}, by
  Lemma~\ref{lem:CCW}, any arc $b$ in $(a,\bar a]$ is enclosed by
  $\bigtriangledown$.  In 
  particular, if in iteration $i+1$ arc $b$ is defunct with a trivial
  witness path then $b$ is enclosed by $\bigtriangledown$.
  Furthermore, suppose $P$ is a nontrivial path consisting of arcs of $T_k$ and the
  reverse $b$ of the first arc of $P$ belongs to $(a, \bar a]$.
  Because $T_i$ is a tree, no vertex of $P$ except
  its first is a vertex of the cycle $\bigtriangledown$, so every arc
  of $P$ is strictly enclosed by $\bigtriangledown$ and in particular
  the final vertex of $P$ is strictly enclosed by $\bigtriangledown$.
  This shows that any arc that becomes defunct in this iteration with
  witness path $P$ is enclosed by $\bigtriangledown$.
\end{proof}

\begin{lemma} \label{lem:between-helper}
  Suppose $P$ is a path starting at the root of $T_i$ each of whose
  arcs is in $T_i$ or became defunct during the first $i+1$
  iterations, and $P$ does not include $xy$.  Then $\length(P) \geq
  \length(T_i[\pend(P)])$.
\end{lemma}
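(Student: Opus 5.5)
The plan is to pass to slack costs with respect to $T_i$ and prove the equivalent statement $\ell_{\phi_{T_i}}(P) \geq 0$ for every such path $P$. By~(\ref{path-reduced-length}) with $\phi=\phi_{T_i}$ and $\phi_{T_i}(\text{root})=0$, we have $\ell_{\phi_{T_i}}(P)=\length(P)-\length(T_i[\pend(P)])$, so this is exactly the inequality to show. Arcs of $T_i$ have zero slack cost. I would split the arcs of $P$ that are not in $T_i$ into two kinds: arcs that were defunct before iteration $i+1$, and arcs that became defunct during iteration $i+1$. By Corollary~\ref{cor:defunct-enclosed}, arcs of the second kind, other than $xy$, are strictly enclosed by $\bigtriangledown$. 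Since $xy$ is excluded from $P$ by hypothesis, and $xy$ is leafmost among tense arcs, every such arc is relaxed with respect to $T_i$ and so has nonnegative slack cost.

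The remaining difficulty is the old defunct arcs, which may well be tense with respect to $T_i$. Here the plan is to use Invariant~\ref{inv:between} for $T_i$, which holds by the induction hypothesis. I would decompose $P$ at its arcs of the second kind: $P = Q_0 \circ e_1 \circ Q_1 \circ e_2 \circ \cdots \circ e_m \circ Q_m$. Each $e_j$ became defunct in iteration $i+1$, and each $Q_j$ consists of arcs of $T_i$ and arcs defunct before iteration $i+1$. I would then prove $\ell_{\phi_{T_i}}(P)\ge 0$ by induction on $m$.

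\begin{itemize}
\item \textbf{Base case $m=0$.} This is exactly Invariant~\ref{inv:between} applied to $T_i$.
\item \textbf{Inductive step.} Replace the prefix $Q_0\circ e_1\circ\cdots\circ e_{m}$ by the tree path $T_i[\pstart(Q_m)]$. The induction hypothesis gives that this replacement does not increase slack cost. Indeed, the prefix costs at least $\phi_{T_i}(\pstart(Q_m))$, because its slack cost is nonnegative: its last arc $e_m$ is relaxed, and everything before $e_m$ is handled inductively. The resulting path $T_i[\pstart(Q_m)]\circ Q_m$ starts at the root and consists of arcs of $T_i$ and old defunct arcs, so Invariant~\ref{inv:between} applies to it. Chaining the inequalities gives
\[
\length(P)\ \geq\ \phi_{T_i}(\pstart(Q_m)) + \length(Q_m)\ \geq\ \length(T_i[\pend(P)]).
\]
\end{itemize}

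Written this way, the argument really uses only three facts: Invariant~\ref{inv:between} on root-starting paths, relaxedness of the newly defunct arcs, and the telescoping of slack costs along tree paths. A cleaner presentation is a direct induction on the number of arcs of $P$ that are not in $T_i$, always peeling off the last newly-defunct arc.

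The main obstacle I expect is the justification that every newly defunct arc other than $xy$ is relaxed with respect to $T_i$. Corollary~\ref{cor:defunct-enclosed} gives strict enclosure by $\bigtriangledown$. Leafmostness of $xy$ in $T^*$ then has to be translated into the statement that all arcs strictly inside the counterclockwise fundamental cycle (Lemma~\ref{lem:CCW}) correspond to dual edges in the subtree of $T^*$ below $xy$, and are therefore relaxed. The paper asserts this just before Lemma~\ref{lem:auxiliary}, so I would cite it. I would still double-check the boundary cases: arcs on $\bigtriangledown$ itself, which are tree arcs or $xy$, and an arc $b$ whose reverse lies on $\bigtriangledown$.
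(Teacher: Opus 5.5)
Your proof is correct, but it is built around a false remark that makes it more roundabout than the paper's. You say arcs defunct before iteration $i+1$ ``may well be tense with respect to $T_i$''. They cannot be. For such an arc $b$, the path $T_i[\tail(b)]\circ b$ starts at the root and uses only arcs of $T_i$ and defunct arcs. Invariant~\ref{inv:between} for $T_i$ therefore gives $\length(T_i[\tail(b)])+\length(b)\geq\length(T_i[\head(b)])$, i.e.\ $b$ is relaxed. In fact the invariant is equivalent to ``every defunct arc is relaxed with respect to $T$''.

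The paper's proof uses exactly this. It inducts on the number of arcs of $P$, peels off the last arc $\bar a$, and compares with $T_i[\tail(\bar a)]\circ\bar a$. There are three cases:
\begin{itemize}
\item $\bar a$ is a tree arc, which is trivial;
\item $\bar a$ became defunct before iteration $i+1$, handled by the invariant;
\item $\bar a$ became defunct in iteration $i+1$, handled by Corollary~\ref{cor:defunct-enclosed} and the leafmost rule, exactly as you do.
\end{itemize}
So the lemma is simply that every arc of $P$ has nonnegative slack cost with respect to $T_i$, summed along $P$.

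Your version inducts on the number of newly defunct arcs and applies the invariant to whole segments $T_i[\pstart(Q_m)]\circ Q_m$. It is valid and uses the same ingredients. Once you see that the invariant already makes each old defunct arc relaxed, though, the segment decomposition adds a layer of induction and buys nothing over the per-arc argument.
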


\begin{proof} By induction on the size of $P$.  Suppose $P$ has at least one 
  arc, and write $P=P' \circ \bar a$.  By the inductive hypothesis,
  $\length(P') \geq \length(T_i[\pend(P')])$.  Let $Q = T_i[\pend(P')] \circ \bar a$,
  and note that $\pend(Q)=\pend(P)$.  We claim that $\length(Q) \geq \length(T_i[\pend(Q)])$,
 which suffices to prove the lemma.
  If $\bar a$ belongs to $T_i$ then the claim holds trivially, so suppose $\bar
  a$ was spent during the first $i+1$ iterations.  If it was spent
  during the first $i$ iterations then the claim follows directly from
Invariant~\ref{inv:between}.  Suppose $\bar a$ was spent during the
$i+1^{st}$ iteration.  By Corollary~\ref{cor:defunct-enclosed}, $\bar a$ is
enclosed by $\bigtriangledown$ (and is not $xy$).  By the Leafmost
Rule, $\bar a$ is relaxed with respect to $T_i$, whence $\length(Q)
\geq \length(T_i[\pend(Q)])$.
\end{proof}

\begin{lemma} \label{lem:between} 
% If $T$ satisfies Invariant~\ref{inv:between} and
%   a leafmost pivot is performed, the resulting tree satisfies
An ordinary pivot that injects $xy$ preserves 
  Invariant~\ref{inv:between}.
\end{lemma}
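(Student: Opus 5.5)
We need to show: for any path $P$ starting at the root of $T_{i+1}$ (same root as $T_i$, since ordinary pivots do not change the root) consisting of arcs of $T_{i+1}$ and arcs defunct after iteration $i+1$, we have $\length(T_{i+1}[\pend(P)]) \leq \length(P)$. The plan is to reduce to Lemma~\ref{lem:between-helper} by comparing $T_{i+1}$-distances with $T_i$-distances.

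\textbf{First step: relate $\phi_{T_{i+1}}$ and $\phi_{T_i}$.} The pivot replaces the arc $a$ of $T_i$ entering $y$ by $xy$. Consequently, for nodes $w$ not in the subtree of $y$, $T_{i+1}[w]=T_i[w]$. For $w$ in the subtree of $y$, $T_{i+1}[w]=T_i[x]\circ xy\circ T_i[y,w]$. Since $xy$ is tense with respect to $T_i$, we get $\length(T_{i+1}[w]) = \length(T_i[w]) - \delta$ with $\delta>0$. Hence $\length(T_{i+1}[w])\le \length(T_i[w])$ for every $w$.

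\textbf{Second step: induction on the number of occurrences of $xy$ in $P$.} We may take $P$ to be simple, since the invariant for a path implies it for any walk obtained from it by removing cycles, given that cycles have nonnegative length.

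If $P$ avoids $xy$, every arc of $P$ is in $T_i$ except possibly $xy$. Note that $a$, which lies in $T_i$, is allowed. Otherwise the arc is defunct during the first $i+1$ iterations. So Lemma~\ref{lem:between-helper} applies and gives $\length(P)\ge \length(T_i[\pend(P)])\ge \length(T_{i+1}[\pend(P)])$.

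If $P$ uses $xy$, write $P=P_1\circ xy\circ P_2$ with $P_1$ avoiding $xy$. Lemma~\ref{lem:between-helper} gives $\length(P_1)\ge \length(T_i[x])$, so $\length(P)\ge \length(T_i[x]\circ xy\circ P_2)=\length(T_{i+1}[y]\circ P_2)$. Now $T_{i+1}[y]\circ P_2$ is a path from the root of the required form, with prefix $T_{i+1}[y]$ lying in $T_{i+1}$, and $P_2$ avoids $xy$.

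\textbf{Main obstacle: handling $P_2$.} We need $\length(T_{i+1}[y]\circ P_2)\ge \length(T_{i+1}[\pend(P_2)])$, and here $T_{i+1}[y]$ is not a $T_i$-path. The approach is to do induction on the arcs of $P_2$, using the same one-arc comparison as in the proof of Lemma~\ref{lem:between-helper}, but with respect to $\phi_{T_{i+1}}$. For each arc $b=uw$ that is in $T_{i+1}$ or defunct, we show $\phi_{T_{i+1}}(u)+\length(b)\ge \phi_{T_{i+1}}(w)$. There are three cases.

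\begin{itemize}
\item If $b$ is in $T_{i+1}$, this is equality.
\item If $b$ became defunct in an earlier iteration, or in iteration $i+1$ (hence is strictly enclosed by $\bigtriangledown$ by Corollary~\ref{cor:defunct-enclosed} and relaxed with respect to $T_i$), we know $\phi_{T_i}(u)+\length(b)\ge\phi_{T_i}(w)$; for arcs defunct earlier this follows from Invariant~\ref{inv:between} applied to $T_i[u]\circ b$. Since $\phi_{T_{i+1}}=\phi_{T_i}-\delta\cdot[\text{in subtree of }y]$, the inequality can only fail when $w$ is in the subtree of $y$ and $u$ is not.
\item In that case ($w$ in the subtree of $y$, $u$ outside), we use $\phi_{T_{i+1}}(w)=\phi_{T_i}(w)-\delta\le \phi_{T_i}(w)\le\phi_{T_i}(u)+\length(b)=\phi_{T_{i+1}}(u)+\length(b)$. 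So this case is in fact fine.
\end{itemize}

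The only potentially bad case is when $u$ is in the subtree of $y$ and $w$ is not, which is where the planarity and counterclockwise structure are needed. I would first try to show this case is harmless by the same telescoping: summing the per-arc inequalities along $T_{i+1}[y]\circ P_2$ requires the inequality only for each arc individually.

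The residual worry is an arc $b$ leaving the $y$-subtree. For such an arc, $\phi_{T_{i+1}}(u)+\length(b)=\phi_{T_i}(u)-\delta+\length(b)$, and we would need $\delta$ not to spoil relaxedness. If this direct approach fails, I would fall back on the structure of $\bigtriangledown$ (Lemma~\ref{lem:CCW}, Corollary~\ref{cor:defunct-enclosed}) to argue that a defunct arc leaving the subtree of $y$ can be rerouted. Alternatively, I would avoid the issue by applying Lemma~\ref{lem:between-helper} to $P_1$ together with the suffix of $P_2$ after its last exit from the $y$-subtree.
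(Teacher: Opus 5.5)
\paragraph{Verdict: incomplete.} Your reduction is sound and is the same as the paper's, but the proposal stops at the one case that carries the whole difficulty, so it does not prove the lemma.

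\paragraph{What matches the paper.} The root is unchanged. With $\delta=\ell(T_i[y])-\ell(T_i[x])-\ell(xy)>0$, you have $\ell(T_{i+1}[w])=\ell(T_i[w])-\delta$ on the subtree $S$ of $y$, and no change elsewhere. The case $xy\notin P$ follows from Lemma~\ref{lem:between-helper}. For $P=P_1\circ xy\circ P_2$ you correctly get $\ell(P)\ge\ell(T_{i+1}[y]\circ P_2)$. One slip: your second bullet has the sign reversed. The slack of $uw$ drops by $\delta$ when $u\in S$ and $w\notin S$, not the other way round; you correct this a sentence later.

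\paragraph{The gap.} Arcs of $T_{i+1}$ never leave $S$. So any path through $xy$ that ends outside $S$ must use a defunct arc $uw$ with $u\in S$ and $w\notin S$. The lemma therefore amounts to the claim that every such arc has $T_i$-slack at least $\delta$, and you leave this unproved. None of your fallbacks supplies it:
\begin{itemize}
\item Telescoping is circular: it needs exactly this per-arc inequality.
\item Lemma~\ref{lem:CCW} and Corollary~\ref{cor:defunct-enclosed} only concern arcs that become defunct in iteration $i+1$. All of those have their heads in $S$: a witness path starts at $y$ with a child arc and runs leafward. So they never leave $S$.
\item The troublesome arcs were made defunct earlier. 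The cycle $\bigtriangledown$ says nothing about them, and Invariant~\ref{inv:between} for $T_i$ only gives them slack $\ge 0$.
\item Rerouting at the last exit from $S$ still needs a lower bound on a prefix that passes through $xy$ and then exits $S$. That is the same inequality.
\end{itemize}
A concrete instance: $\rev(a)$ leaves $S$, and its $T_i$-slack is $\ell(a)+\ell(\rev(a))$, which need not be $\ge\delta$. Closing the gap would require showing that arcs like this cannot already be defunct at iteration $i+1$. That is a statement about the algorithm's history, so it needs structural information beyond Invariant~\ref{inv:between}.

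\paragraph{The paper's proof does not fill this either.} As written, it uses your decomposition and applies Lemma~\ref{lem:between-helper} to $P_1$ and to $Q=T_i[y]\circ P_2$. It then concludes with
$\ell(T_i[\pend(Q)])-\ell(T_i[y])+\ell(T_i[x])+\ell(xy)=\ell(T_{i+1}[\pend(Q)])$,
justified by treating $T_i[x]\circ xy\circ P_2$ as the tree path to $\pend(P)$. That equality holds exactly when $\pend(P)\in S$. When $\pend(P)\notin S$, the left side equals $\ell(T_{i+1}[\pend(P)])-\delta$, and the chain only yields $\ell(P)\ge\ell(T_{i+1}[\pend(P)])-\delta$. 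So the written proof tacitly covers only the case you already settled, and the case you isolated is open there too.
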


\begin{proof}  Assume for a contradiction that $T_{i+1}$ does not
satisfy Invariant~\ref{inv:between}, and let $P$ be a counterexample
that minimizes the number of arcs not in $T_{i+1}$.

Let $P$ be a path starting at the root of $T_{i+1}$ every arc of which
is in $T_{i+1}$ or became defunct after at most $i+1$ iterations.  We need
to show that $\length(P) \geq \length(T_{i+1}[\pend(P)])$.
If $xy$ is not in $P$ then this follows from
Lemma~\ref{lem:between-helper} and the fact that the root of $T_i$ is
the root of $T_{i1}$ and $\length(T_{i+1}[\pend(P)]) \leq \length(T_i[\pend(P)])$.

Suppose $xy$ is in $P$, and write $P
= P_1\circ xy \circ P_2$.  By Lemma~\ref{lem:between-helper}, $\length(P_1)
\geq \length(T_i[\pend(P_1)])$.  Let $Q = T_i[y] \circ P_2$.  By
Lemma~\ref{lem:between-helper}, $\length(Q) \geq
\length(T_i[\pend(Q)])$.

\begin{eqnarray*}
  \length(P) & = & \length(Q) - \length(T_i[y]) + \length(P_1) + \length(xy)\\
             & \geq & \length(Q) -  \length(T_i[y]) + \length(T_i[x]) + \length(xy)\\
             & \geq & \length(T_i[\pend(Q)]) - \length(T_i[y]) + \length(T_i[x]) +\length(xy)\\
             & = & \length(T_{i+1}[\pend(Q)])
\end{eqnarray*}
where the last equality holds because $Q=T_i[y] \circ P_2$ and
$T_{i+1}[\pend(P)] = T_i[x] \circ xy \circ P_2$.
\end{proof}

\begin{figure} \centering
  \includegraphics[scale=.75]{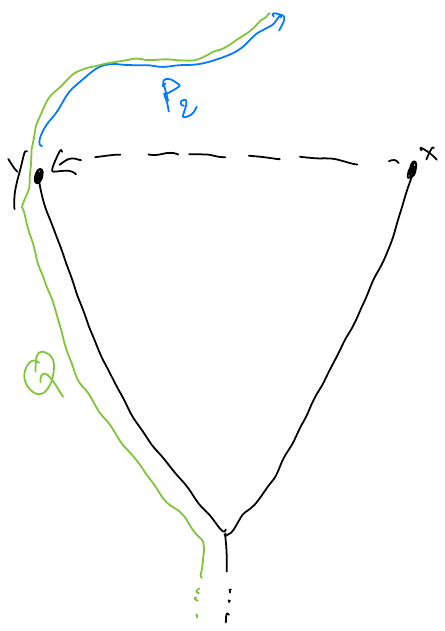}
  \caption{Figure for the proof of Lemma~\ref{lem:between}}
  \label{fig:proof-ordinary-pivot-preserves-invariant}
  \end{figure}

\subsection{Special pivots}

Now we consider the case in which iteration $i+1$ is a special pivot.
Let $r_j$ be the root of $T_i$.  The tree $T_{i+1}$ is obtained from
$T_i$ by removing the arc of $T_i$ whose head is $r_{j+1}$ and adding
the boundary arc $r_{j+1} r_j$.

\begin{lemma} \label{lem:inv-holds-special-pivot}
  If iteration $i+1$ is a special pivot, it preserves Invariant~\ref{inv:between}.
\end{lemma}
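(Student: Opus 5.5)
The plan is to compare $T_{i+1}$ with $T_i$ and reduce the invariant for $T_{i+1}$ to Invariant~\ref{inv:between} for $T_i$. Two facts are available. First, a special pivot occurs only after the while-loop has terminated, so $T_i$ is an $r_j$-rooted shortest-path tree. Second, lengths are such that shortest paths exist, so no cycle has negative length.

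Let $e$ be the arc of $T_i$ whose head is $r_{j+1}$, let $\bar a = r_{j+1}r_j$, and let $D$ be the set of descendants of $r_{j+1}$ in $T_i$. Then the root paths of $T_{i+1}$ are:
\begin{itemize}
\item $T_{i+1}[v] = T_i[r_{j+1},v]$ for $v\in D$;
\item $T_{i+1}[v] = \bar a\circ T_i[v]$ for $v\notin D$.
\end{itemize}
Every arc of $T_{i+1}$ other than $\bar a$ is an arc of $T_i$. The only newly defunct arc is $\bar a$.

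So let $P$ be a path starting at $r_{j+1}$ whose arcs are in $T_{i+1}$ or defunct, and let $v=\pend(P)$. I split into two cases according to whether $P$ uses $\bar a$.

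\emph{Case 1: $P$ uses $\bar a$.} Write $P = P_1\circ\bar a\circ P_2$ with the last occurrence of $\bar a$ displayed.
\begin{itemize}
\item The path $P_2$ starts at $r_j$ and uses only arcs of $T_i$ or arcs defunct before iteration $i+1$. Invariant~\ref{inv:between} for $T_i$ gives $\length(P_2)\ge\length(T_i[v])$.
\item The path $P_1$ is a closed walk at $r_{j+1}$ (or contains $\bar a$ again, which is handled the same way after splitting). Applying the invariant for $T_i$ to $T_i[r_{j+1}]\circ P_1$ gives $\length(P_1)\ge 0$.
\item Combining, $\length(P)\ge\length(\bar a)+\length(T_i[v])$.
\end{itemize}
If $v\notin D$, the right-hand side is exactly $\length(T_{i+1}[v])$. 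If $v\in D$, it suffices that $\length(\bar a)+\length(T_i[r_{j+1}])\ge 0$, which holds because $T_i[r_{j+1}]\circ\bar a$ is a cycle and cycles are nonnegative.

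\emph{Case 2: $P$ avoids $\bar a$.} Apply the invariant for $T_i$ to $T_i[r_{j+1}]\circ P$. This gives $\length(P)\ge \length(T_i[v])-\length(T_i[r_{j+1}])$, which equals $\length(T_{i+1}[v])$ when $v\in D$.

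The remaining subcase, $P$ avoiding $\bar a$ and ending outside $D$, is the main obstacle. Here the bound just obtained is too weak. Since $P$ starts in $D$ and ends outside it, $P$ must use an arc $b$ leaving $D$. That arc is not in $T_{i+1}$ (the only tree arc leaving $D$ is $\bar a$), so $b$ is defunct.

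I would try to show, using planarity, that any such defunct arc can be rerouted through $\bar a$. The approach is as follows.
\begin{itemize}
\item The fundamental cycle of $\bar a$, namely $T_i[r_j,r_{j+1}]$ followed by $\bar a$, separates $D$ from its complement in the sense that matters. Because $r_j,r_{j+1}$ are consecutive on the infinite face, $\bar a$ is a boundary arc.
\item I would use the rule for becoming defunct, together with Corollary~\ref{cor:defunct-enclosed} and the counterclockwise orientation from Lemma~\ref{lem:CCW}, to argue about any defunct arc $b=uw$ with $u\in D$, $w\notin D$. Such an arc either satisfies $\length(T_i[r_{j+1},u])+\length(b)\ge \length(\bar a)+\length(T_i[w])$, or can be excluded altogether.
\item With that inequality in hand, an induction on the number of arcs of $P$ leaving $D$ (taking the last such exit, as in Lemma~\ref{lem:between-helper}) converts Case 2 into Case 1.
\end{itemize}
If this planar exclusion turns out to be false as stated, the fallback is to strengthen what is proved about arcs crossing the boundary of $D$. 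Concretely, I would verify inductively that every defunct arc remains relaxed with respect to the current tree. The special pivot preserves relaxedness for all arcs except those from $D$ to outside, so only those crossing arcs would need a separate argument.
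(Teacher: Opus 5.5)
Your Case 1 and your Case 2 with $\pend(P)\in D$ are the paper's two cases, and you handle them more carefully than the paper does. Its Case 1 writes $\length(T_i[\pend(Q)])=\length(T_{i+1}[\pend(Q)])-\length(r_{j+1}r_j)$, which is an equality only for $\pend(Q)\notin D$. Your use of the nonnegative cycle $T_i[r_{j+1}]\circ r_{j+1}r_j$ supplies the inequality needed when $\pend(Q)\in D$.

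The genuine gap is the subcase you flag yourself: $P$ avoids $\bar a=r_{j+1}r_j$ and ends at $v\notin D$. For it you give a plan, not an argument.

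\begin{itemize}
\item The dichotomy for a defunct arc $uw$ with $u\in D$, $w\notin D$ is only asserted, and the results you cite do not address it. Corollary~\ref{cor:defunct-enclosed} and Lemma~\ref{lem:CCW} place such an arc relative to the fundamental cycle of the arc injected \emph{when $uw$ became defunct}, with respect to the tree \emph{at that moment}. Nothing in your sketch connects that cycle to the subtree $D$ of $r_{j+1}$ in $T_i$ after all the intervening pivots.
\item The cycle $T_i[r_j,r_{j+1}]\circ\bar a$ does not separate $D$ from $V\setminus D$. It meets $D$ only at $r_{j+1}$, and it degenerates completely when $\rev(\bar a)\in T_i$. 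The curve that separates them is the dual cycle of the cut between $D$ and $V\setminus D$.
\item Your fallback is circular. Arcs of $T$ have zero slack with respect to $\phi_T$, so $\length(P)-\length(T[\pend(P)])$ is the sum of the slacks of the non-tree arcs of $P$. Hence ``every defunct arc stays relaxed'' is equivalent to Invariant~\ref{inv:between}. The special pivot lowers exactly the slacks of arcs from $D$ to $V\setminus D$, each by $\length(T_i[r_{j+1}])+\length(\bar a)$, so the fallback just restates the goal.
\end{itemize}

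Comparing with the paper will not fill this hole. Its Case 2 applies the invariant for $T_i$ to $T_i[r_{j+1}]\circ P$ and then uses $\length(T_i[r_{j+1}])+\length(T_{i+1}[\pend(P)])=\length(T_i[\pend(P)])$. That identity holds only for $\pend(P)\in D$. For $\pend(P)\notin D$ the left side exceeds the right by $\length(T_i[r_{j+1}])+\length(r_{j+1}r_j)\ge 0$, and no contradiction follows. So the written proof skips the same subcase.

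Completing the argument needs a genuinely new planar fact. One option is to show that no defunct arc has its tail in $D$ and its head outside $D$; since $r_{j+1}r_j$ is the only arc of $T_{i+1}$ leaving $D$, your subcase would then be vacuous. Another is to show that every such arc has slack with respect to $T_i$ at least $\length(T_i[r_{j+1}])+\length(r_{j+1}r_j)$.

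The final equality in the proof of Lemma~\ref{lem:between} has the same restriction: it is valid only for endpoints in the subtree of $y$ in $T_i$. So the natural route is to strengthen the invariant with a statement of this kind about defunct arcs leaving the re-hung subtree, and to prove it for both kinds of pivot.
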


\begin{proof}  Assume for a contradiction that
  a path $P$ is a counterexample showing that $T_{i+1}$ does not satisfy the
  invariant.  Then
  \begin{equation}
    \text{every arc of $P$ is in $T_{i+1}$ or is defunct after
      the pivot} \label{eq:arcs-defunct}
      \end{equation}
  \begin{equation}
    \length(P) < \length(T_{i+1}[\pend(P)]) \label{eq:path-short}
    \end{equation}

  \paragraph{Case 1:} $P$ contains $r_{j+1} r_j$.  Write $P = r_{j+1} r_j \circ Q$.
  Note that $\pend(Q)=\pend(P)$.  Then

\begin{equation} \text{every arc of $Q$ is in $T_i$ or is defunct
    before the pivot
 \qquad by (\ref{eq:arcs-defunct})
}
\end{equation}
\begin{eqnarray*}
\length(T_i[\pend(Q)]) & = & \length(T_{i+1}[\pend(Q)]) -
\length(r_{j+1} r_j)\\
& > & \length(P) - \length(r_{j+1} r_j) \qquad \text{by
    (\ref{eq:path-short})}\\
& = & \length(Q)
\end{eqnarray*}

\paragraph{Case 2:} $P$ does not contain $r_{j+1} r_j$.  Let $Q=
T_i[r_{j+1}] \circ P$.  Then
\begin{displaymath}
  \text{every arc of $Q$ is in $T_i$ or is defunct before the pivot \qquad by (\ref{eq:arcs-defunct})}
  \end{displaymath}
\begin{eqnarray*}
  \length(Q) & = & \length(T_i[r_{j+1}]) + \length(P)\\
  & < & \length(T_i[r_{j+1}]) + \length(T_{i+1}[\pend(P)]) \qquad \text{by
    (\ref{eq:path-short})}\\
  & = & \length(T_i[\pend(P)])
  \end{eqnarray*}
  
Thus in either case $Q$ is a counterexample showing that $T_i$ does not satisfy the
  invariant.
\end{proof}

\begin{lemma} \label{lem:special-pivot-not-defunct}
The arc $r_{j+1} r_j$ was not defunct by the beginning of iteration~$i$.
\end{lemma}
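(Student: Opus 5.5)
The plan is to rule out, one at a time, the ways an arc can become defunct, and to show that none of them can apply to the boundary arc $r_{j+1} r_j$. By the rule for becoming defunct, an arc becomes defunct in some iteration $k+1 \leq i$ in exactly one of three ways:
\begin{enumerate}[label=(\alph*)]
\item it is the arc injected by a special pivot;
\item it is the arc $xy$ injected by an ordinary pivot;
\item it is some other arc that becomes defunct during an ordinary pivot, with a witness path.
\end{enumerate}
I will treat these three cases separately. For every such $k$, the root of $T_k$ is one of $r_0,\ldots,r_j$.

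\textbf{Case (c).} By Corollary~\ref{cor:defunct-enclosed}, applied to iteration $k+1$, such an arc is \emph{strictly} enclosed by the fundamental cycle $\bigtriangledown$ of the injected arc. The arc $r_{j+1} r_j$ lies on the boundary of the infinite face. A cycle never encloses the infinite face, so every arc strictly enclosed by a cycle has interior faces on both sides. Hence $r_{j+1} r_j$ cannot be strictly enclosed, and case (c) is impossible.

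\textbf{Case (a).} The special pivots before iteration $i+1$ inject only the arcs $r_{m+1} r_m$ with $m<j$. These are boundary darts at distinct positions of the boundary walk, so none of them is the dart $r_{j+1} r_j$. I will state this carefully so that it also covers the footnoted possibility that a node occurs more than once on the boundary, by speaking of boundary darts rather than node pairs.

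\textbf{Case (b).} This is where the orientation argument is needed. Suppose some ordinary pivot injected $xy = r_{j+1} r_j$. By Lemma~\ref{lem:CCW}, its fundamental cycle $\bigtriangledown$, oriented so that $xy$ is forward, is counterclockwise, so the region it encloses lies to the left of each of its darts. The boundary is traversed counterclockwise as $r_0, r_1, \ldots$, so the interior of the graph lies to the left of the dart $r_j r_{j+1}$. Therefore the infinite face lies to the left of the reversed dart $r_{j+1} r_j$. This means $\bigtriangledown$ would enclose the infinite face, which is a contradiction.

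The main obstacle is making this orientation step rigorous with the paper's combinatorial-embedding conventions. The convention that $\pi$ goes counterclockwise, the meaning of ``counterclockwise cycle'', and the side of a dart on which its enclosed faces lie must all match the claim that a counterclockwise cycle cannot contain a dart whose left face is the infinite face. I would first fix these conventions from~\cite{planarity}. I would then check them on the special-pivot arc $r_{1} r_0$ in a small example, to confirm that the direction of the boundary order is consistent with the algorithm injecting $r_{j+1} r_j$. With cases (a), (b) and (c) excluded, the arc $r_{j+1} r_j$ was not defunct before the special pivot, as claimed.
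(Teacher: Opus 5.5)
Your proposal is correct and follows the paper's proof. The paper also excludes a special pivot because no earlier special pivot injected this dart. For an ordinary pivot, it uses Corollary~\ref{cor:defunct-enclosed} to leave two options: strict enclosure by $\bigtriangledown$, which is impossible for a boundary arc, or $xy = r_{j+1}r_j$, which is impossible since $\bigtriangledown$ is counterclockwise by Lemma~\ref{lem:CCW} while $r_{j+1}r_j$ is a clockwise boundary arc. Your dart-level handling of repeated boundary nodes and your explicit left-face orientation check are more careful than the paper's one-line assertions, but the structure is the same.
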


\begin{proof}
Assume for a contradiction that $r_{j+1} r_j$ was made defunct in iteration
$i'$ such that $i' \leq i$.
That iteration cannot have been a special pivot because.
this arc had not previously been involved in such an iteration.
The iteration must have been an ordinary pivot.

By Lemma~\ref{cor:defunct-enclosed}, that pivot injected an arc $xy$
such that either $xy=r_{j+1} r_j$ or the fundamental cycle $\bigtriangledown$ of $xy$ with
respect to $T_{i'}$ strictly encloses $r_{j+1} r_j$.  The latter
cannot happen because $r_{j+1} r_j$ is on the boundary;
$xy$ cannot be $r_{j+1} r_j$ because, by Lemma~\ref{lem:CCW},
$\bigtriangledown$ is counterclockwise and $r_{j+1} r_j$ is a
clockwise boundary arc.
\end{proof}

%% \begin{proof} Suppose the arc $r_{j+1} r_j$ is injected by a special pivot
%% at time $t$.  Assume for a contradiction that this arc was defunct at
%% that time, and consider the pivot that caused this arc to become defunct.
%% (That pivot could not have been special.)  The situation looks like
%% THIS.  The pivot is CW, which contradicts a previous lemma.
%% \end{proof}

\bibliographystyle{plain}
\bibliography{refs,other_refs}
\end{document}